\newtheorem{theorem}{Theorem}
\newtheorem{proposition}{Proposition}
\newtheorem{lemma}{Lemma}
\newtheorem{definition}{Definition}
\newcommand{\R}{\mathbb R}
\newcommand{\F}{\mathcal F}
\begin{document}

\title[Gradient Flow of SRB entropy]{Gradient Flow of the Sinai-Ruelle-Bowen Entropy }

\author[Miaohua Jiang]{Miaohua Jiang \\   \\   Department of Mathematics \\
Wake Forest University\\
Winston Salem, NC 27109, USA      \\ \\
jiangm@wfu.edu
        }


\keywords{Gradient Flow, Sinai-Ruelle-Bowen Entropy,  Expanding maps, Laws of Thermodynamics }

\begin{abstract}We study both the local and global existence of a gradient flow of the Sinai-Ruelle-Bowen entropy functional on a Hilbert manifold of expanding maps of a circle equipped with a
 Sobolev norm in the tangent space of the manifold. We show that, under a slightly modified metric, starting from
any initial value, every trajectory of the gradient flow converges to the map with a constant
expanding rate where the entropy attains the maximal value.
In a simple case, we obtain an explicit formula for the flow's ordinary differential equation representation.
This gradient flow has close connection to a nonlinear partial differential equation:
a gradient-dependent diffusion equation.
\end{abstract}

\maketitle

\large
\baselineskip 16 pt

\centerline{\bf\Large }

\section{Introduction }

Let $f_0$ be a $C^r , r \ge 3$ transitive Anosov map
or an  expanding map on a compact Riemannian manifold $M$.
 Let $U(f_0)$ be the connected open component of all $C^r$ maps in either family topologically conjugate to
 $f_0$. It is well-known that there exists a unique Sinai-Ruelle-Bowen (SRB) measure $\rho_f$ for every map
  $f \in U(f_0)$ \cite{Young} and the
  entropy  of the map with respect to the SRB measure $\rho_f$ is given by the formula
   $H(f)= \int_M \ln J^uf \ d \rho_f$ \cite{KH,Mane}, where $J^uf$ is the Jacobian of $f$ along the
   unstable subspace. We  call it the SRB entropy of $f$. This entropy
  is a Fr\'echet differentiable functional in $f$ with respect to the $C^r$ topology on   $U(f_0)$ \cite{Belse, J12, Ru97a}.

Recently, it has been proven that
   in low dimensional cases,  this entropy functional $H(f)$ does not have non-trivial critical points:
   if the value of the functional is less than its global maximum, then, there exists a
    direction in which the value of the functional increases when the map is perturbed.
    The global maximum of the entropy functional is only reached at maps that are
    smoothly conjugate to a linear map  \cite{J21, J22,SVV}.
    In view of Gallavotti-Cohen Chaotic Hypothesis \cite{G96,G06,GC, GR}, this property can be regarded as the
     realization of the Second Law of Thermodynamics in  mathematical models of chaotic dynamical systems.
       If we hypothesize that each map
     $f$ in $U(f_0)$ represents a possible state of a thermodynamic system,  the SRB entropy of
     $f$ corresponds to the Boltzmann entropy, and a thermodynamic system not at its
     equilibrium  evolves in the direction along which the entropy increases the fastest,
      we  encounter naturally the following two related questions:
One,
\emph{does the fastest increasing direction, or the direction of the gradient vector, of the SRB entropy exist}?
Two, \emph{ if the gradient vector field exists,  does the SRB entropy induce a gradient flow}?
The answers to these two questions are not
automatic since the SRB entropy functional is defined on a Banach manifold $U(f_0)$ with an infinite
 dimensional tangent space.
In this article, we  give affirmative answers to these two questions in a simple case where maps
are just expanding maps on a circle.
Note that in this simple case $U(f_0)$ consists of all $C^r$ expanding maps with the same degree $n\ge 2$.

In next section, we first extend the family $U(f_0)$ (still denoted by $U(f_0)$) to include
 all expanding maps whose $r$th ($r \ge 2$) derivative is $L^2$ and show that $U(f_0)$ is then equipped with
 a natural Hilbert manifold structure with a Sobolev norm in the tangent space. Under this Hilbert manifold
 structure, we show that the SRB entropy functional $H(f)$ remains Fr\'echet differentiable and thus the functional
 gives a gradient vector field over $U(f_0)$.  In Section \ref{sec.local}, we show the gradient vector
  field  is at least  Lipschitz continuous, which guarantees the existence of
   a local gradient flow. We note that a differentiable function's gradient is metric
    dependent. Under a slightly modified metric, we show in Section \ref{sec.global}
    the global existence of a gradient flow of the SRB entropy functional and the
  convergence of the flow to the linear expanding map as time approaches infinity.  In the last section,
   via harmonic  analysis, we obtain an ordinary differential equation representation of the
  gradient flow over the Hilbert manifold equipped with the Sobolev norm
  and give an example of a typical orbit using numerical approximation.
  The gradient flow also leads to a gradient-dependent diffusion equation on the circle.

\bigskip

\section{ Hilbert manifold structure on the family of circle expanding maps }\label{sec.local}

The SRB entropy is a differentiable functional in the space of
 $C^r , (r \ge 3)$  expanding maps on the unit circle. But when we consider the gradient of
 the entropy functional, the $C^r$ norm may not be the most convenient one. So, a
  Sobolev norm becomes a more natural choice instead of the $C^r$-norm.
  We first give the definition of a gradient vector for any Gateaux differentiable functional
  on the Hilbert manifold $\mathcal M$ with a tangent space $T_p\mathcal M=T$
   at each point $p \in \mathcal M$ and a Hilbert metric (inner product) $<\cdot >_\mathcal M$.

  \begin{definition}\label{df.gradient}
  A vector $V \in T_p\mathcal M$ is called a gradient vector of a Gateaux differentiable
   functional $H$ if $H'$s Gateaux derivative defines a bounded linear functional on $T_p\mathcal M$ with its Riesz representation equal to $V$.
  \end{definition}

We now describe a Hilbert manifold structure on a family of circle expanding maps.

\subsection{\bf Hilbert manifold of expanding maps on the circle}

First of all,  by considering its lift, we identify every continuous map $f$ on
$S^1=\{ e^{ i 2\pi x},  x \in [0, 1)\}$ with a function $\tilde f$ defined on the
real line satisfying the conditions $ \tilde f(0) \in [0, 1)$ and
$\tilde f(x+1) = \tilde f (x) +n$, where $n$ is the degree of the map.
We only consider orientation preserving maps. The case for orientation
reversing maps is essentially the same. Since   expanding maps are defined
on the circle,  there must be a fixed point and  we may assume $0$ is a fixed point:
$f(0)=0=\tilde f(0)$ and $\tilde f(1) =n$. Thus, each $C^r, r \ge 1$  map $f$ on the circle is
 identified with its lift $\tilde f$ on $\R$,  the universal covering of $S^1$,
   with the following properties:
   $ \tilde  f(0)=0,  \tilde   f(1) = n, \tilde   f'(x) > 1,$ and $  \tilde  f^{(k)}(0)= \tilde  f^{(k)}(1) ,  k=1,2,\cdots, r.$
    We now define a family of $C^{r-1}$ expanding maps $F_r, r\ge 2$ via properties of their lift maps
     where we consider the gradient flow of the entropy functional:
$$ F_r =\{f:    f  \in C^{r-1} (S^1),  \tilde f(0)=0,  \tilde f(1) = n,  \tilde f'(x) > 1, x \in (0,1), $$
 $$\qquad  \tilde  f^{(k)}(0)=\tilde  f^{(k)}(1) ,  k=1,2,\cdots, r-1,  \tilde f^{(r)}    \in L^2[0,1] \}.$$
 The family $F_r$ is slightly larger than the $C^r$ family of expanding maps since
 we only require the $r$th   derivative $f^{(r)}$ to be $L^2$, instead of being continuous.
$F_r$  is a Hilbert manifold modeled on Sobolev space $H^r$.  For any given map $ f\in F_r$,
  its open neighborhood
is identified with an open neighborhood of the origin of the following Hilbert space
$$ \Phi_r =\{  \phi(x) \in C^{r-1} [0,1]: \phi^{(r)} \in L^2[0,1],\phi(0)=0,
  \phi^{(k)}(0)=\phi^{(k)}(1), 0\le k \le r-1.\}$$
equipped with the Sobolev norm
\[  \| \phi(x) \|^2_{H^r} =  \sum_{k=0}^r  \int_0^1 [\phi^{(k)}]^2 (x) dx  .\]

Notice that $\Phi_r$ can be identified with a Sobolev sequence space
$$ \{\phi(x)= \sum_{n=0}^\infty  a_n \cos 2 \pi n x + \sum_{n=1}^\infty b_n \sin 2 \pi n x: \
   \sum_{n=1}^\infty  n^{2r} ( a_n^2 + b_n^2 ) < \infty,  \phi(0)=0\}$$  equipped with the
    corresponding Sobolev norm
$$\| \phi(x) \|^2_{H^r}= a_0^2 + \sum_{k=0}^r  \sum_{n=1}^\infty  n^{2k} ( a_n^2 + b_n^2 ).$$
where $\{a_n\}_{n=0}^\infty$ and $\{b_n\}_{n=1}^\infty$ are Fourier coefficients of $\phi$.

When $r \ge 3$, each expanding map $ f \in F_r$
possesses a unique invariant probability measure absolutely continuous with respect to the Lebesgue measure on $S^1$.
Its probability density function $\rho_f(x)$ is at least $C^1$  ( \cite{Ba} )
and depends on $f$ differentiably in $C^{r-1}$ topology. It yields the Fr\'echet differentiability of
  the entropy of $f$ with respect to the measure $\rho_f dx$, i.e.,
   $H(f) = \int_{S^1} \ln f'(x) \rho_f(x) dx .$  We want to prove  that $H(f)$ is also a Fr\'echet differentiable functional with
respect to the new Hilbert metric on $F_r$.  This can be done by considering the transfer operator over
the Sobolev space of density functions. We leave the proof in this general case to another article since
in this paper, we restrict our study to
a simpler case when the SRB measure $\rho_f$ is preserved by the perturbation of $f$.
When $\rho_f(x)$ is independent of $f$, the differentiability with respect to the Hilbert metric is much easier
to prove. We can directly calculate the derivative operator and prove the
Fr\'echet differentiability with respect to the Sobolev norm.

\subsection{Hilbert manifold of expanding maps preserving the Lebesgue measure}

We now define a Hilbert manifold $F_r(\rho),  r\ge 2$ to be the subset of $F_r$ consisting of maps
that preserve the same invariant measure with a density function $\rho(x)$.
 We may assume $\rho(x) =\rho_0(x)=1$ by changing the Riemannian metric
  on the circle \cite{DM, J21}. The corresponding subset is denoted by $F_r(\rho_0)$.

Given any map $f\in  F_r(\rho_0)$, the invariance of the Lebesgue measure under $f$ is characterized by the equation
\begin{equation}\label{eq:transf}
 1 = \rho_0(x)=\sum_{i=1}^n  \frac{\rho_0(y_i) }{f'(y_i) }=\sum_{i=1}^n  \frac{1}{f'(y_i) }, x \in [0,1)\end{equation} where $ 0\le y_1 < y_2< \cdots<  y_n < 1$
 are $n$ preimages of $x \in [0,1):$  $\tilde f(y_i) = x, \text{mod\ 1}, i=1,2, \cdots, n.$

We see that not only the equation (\ref{eq:transf}) is nonlinear in $f$, the points $\{y_i\}$,    preimages of $x$,
   also depend  on $f$. Thus, it is not convenient when we
calculate Gateaux derivatives of the entropy functional with respect to $f$. Instead,
we now identify the subset $F_r(\rho_0)$ with another Hilbert manifold with a Sobolev tangent space
 where the same SRB entropy functional's properties are much easier to study.

For each map $f \in F_r(\rho_0)$, we consider its lift $\tilde f$'s inverse map
$g(y), y \in [0, n]$. The $r$th derivative of $\tilde f$ is in $L^2[0,1]$ if and only if
$g(y)$'s $r$th derivative is in $L^2[0,n]$ since we have $\tilde f'(x) > 1, x \in [0,1]$.
$\tilde f^{(k)}(0)=  \tilde f^{(k)}(1), k=1,\dots r$ if and only if
 $g(y)$ is differentiable in $[0, n]$ up to order $r$ and $g^{(k)}(0^+) = g^{(k)}(n^-)$,
  $k=1,\dots r$. That is, $g(y)$ can be extended to a function whose
  $k$th derivative $1 \le k \le r$ is a period $n$ function. Given any $x \in [0, 1)$,
  $\tilde f$ maps   $n$ preimages of $x\in [0,1)$ under $f$ to $x, x+1, \cdots, x + (n-1)$
  in the universal covering space:  $\tilde f(y_i) = x+ i-1, i=1,2,\cdots, n$.  Thus,
   we have $$\frac{1}{f'(y_i) } = g'(x + i-1),\ i=1,2,\cdots, n $$ and the invariance of the Lebesgue
   measure becomes an equation linear in $g$:
 $$ 1 = \sum_{i=1}^n g'(x + i-1)     ,  x \in [0,1).$$

We now define the Hilbert manifold where we consider the SRB entropy's gradient for $r \ge 2$.
$$G_r=\{ g(y) \in C^{r-1}[0,n]: g(0)=0, g(n)=1, 0< g'(x) < 1,                g^{(r)} \in L^2[0, n], $$
 $$ \qquad g^{(k)}(0^+) = g^{(k)}(n^-), 1 \le k \le r, \sum_{i=1}^n g'(y + i-1)  =1, y \in [0,1)\}.$$

For each $g \in G_r$, its open neighborhood is identified with an open neighborhood in the Sobolev space, still denoted by $\Phi_r$:
\begin{align*}\Phi_r= &\{ \phi: \phi^{(r)} \in L^2[0, n], \phi(0)=0,  \phi^{(k)}(0^+) = \phi^{(k)}(n^-),
 0 \le k \le r,  \\
 &  \sum_{i=1}^n \phi'(y + i-1)  =0, y \in [0,1)\}. \end{align*}

The Sobolev norm on $\Phi_r$ is defined in the same way:
\[  \| \phi(x) \|^2_{H^r} =  \sum_{k=0}^r  \int_0^n [\phi^{(k)}]^2 (y) dy  .\]

For convenience, we denote the even larger family of functions without the constraint of preservation of Lebesgue measure by
$\bar  G_r$ and $\bar \Phi_r$:
$$\bar G_r=\{ g(y)  \in C^{r-1}[0,n], g(0)=0, g(n)=1, 0< g'(x) < 1,                g^{(r)} \in L^2[0, n], $$
$$ \qquad g^{(k)}(0^+) = g^{(k)}(n^-), 1 \le k \le r \},$$
$$\bar \Phi_r=\{ \phi: \phi^{(r)} \in L^2[0, n], \phi(0)=0,  \phi^{(k)}(0^+) = \phi^{(k)}(n^-), 0 \le k \le r \}.$$
Indeed, $G_r$ is a submanifold of  $\bar G_r$ and $\Phi_r$ a subspace of $\bar \Phi_r$.

On the Hilbert manifold  $\bar G_r$, there is a nature metric $d(g_1, g_2)$ that is consistent with the Sobolev norm in the tangent space of $\bar G_r$:
$$d^2(g_1, g_2)= \sum_{k=0}^r   \int_0^n [g_1^{(k)}- g_2^{(k)}   ]^2 (y) dy  .$$
 Indeed, $g_1-g_2 \in \bar \Phi_r$ for any $g_1, g_2 \in  \bar G_r $. Thus,
$$d(g_1, g_2) =\|g_1-g_2\|_{H^r}.$$

Let $a_i$ denote the preimage of $i$ under $\tilde f$ for $i=0, 1, \cdots, n-1$. The SRB entropy defined for every $f \in F_r(\rho_0)$ becomes
$$H(f) = \int_{S^1} \ln f'(x) d x =\sum_{i=1}^n  \int_{a_{i-1}}^{a_{i}}  \ln \tilde f'(x) d x $$
\begin{equation}\label{eq:entropy}
=\sum_{i=1}^n  \int_{ i-1}^{i} \ln \frac{1}{g'(y_i)}\  g'(y_i) d y_i = - \int_{0}^{n} \ln {g'(y )}\ g'(y ) d y=: H(g). \end{equation}

{\bf Remarks.} (1) We point out   an interesting connection between the SRB entropy of
 measure-preserving expanding maps and the Gibbs entropy of a probability
  measure with a density. Any function $g(y) \in \bar G_r$ can be considered as a probability measure on $[0, n]$
   with a density function $0<g'(y)<1$. $H(g)$ is then precisely the Gibbs entropy of a probability measure.

(2) We also point out  similarities and differences  between our approach and the approach  of Jordan, Kinderleherer, and Otto (JKO) \cite{JKO1, JKO2} in
their study of the gradient flow of the entropy functional (see also \cite{Maas}).  The main similarity is that both approaches start from the Gibbs
 entropy. But  two approaches have major differences:   In JKO's approach, the entropy  (or the Gibbs-Boltzmann entropy
  as it is called in  \cite{JKO1, JKO2}.  For  more discussions on Gibbs and Boltzmann entropy, see \cite{GL19} )  is defined for
   probability density functions on $\R^n$ not associated with any dynamical system. They use a discretized  process to obtain
    an approximate orbit from an initial density and then show  that  the orbit converges to an orbit from the heat equation as
    the step-size approaches zero. In our approach, the SRB entropy is defined for a chaotic dynamical system. That the entropy
    formula (\ref{eq:entropy}) taking the Gibbs entropy form seems to be coincidental. For Anosov systems on a higher dimensional
    manifold, the SRB entropy is defined for a higher dimensional map. The entropy formula may not take this particular form.
      The SRB entropy changes as the underlying chaotic dynamical system varies and we directly calculate the entropy functional's
      gradient under the Sobolev norm.   Our approach  leads to a system of countably many ordinary differential equations where the vector field is defined
       via integrals.  The system does have close connection to a nonlinear partial differential equation,  a gradient-dependent diffusion
       equation on the unit circle. See Section \ref{sec.numerical} for details.

We now state main results on the SRB entropy functional $H(g)$ on the Hilbert manifold $G_r$.

\begin{theorem}\label{thm1}

(1) The SRB entropy functional $H(g)$ is Fr\'echet differentiable on $G_r$.

(2) The gradient vector field of $H(g)$ is well-defined: for each $g \in G_r$, there exists a unique vector $X(g) \in \Phi_r$ such that
the directional derivative $<DH(g), X(g)/\|X\|_{H^r}>$ is the unique maximum among
all directional derivatives.

(3) The gradient vector field $X(g) \in G_r$ is Lipschitz continuous in $g$.

(4) There is a unique critical point for the gradient vector field $X(g)$ at the point where $g$ is a linear function on $[0, n]$.
\end{theorem}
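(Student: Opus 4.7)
The strategy is to differentiate the closed-form expression $H(g) = -\int_0^n g'(y)\ln g'(y)\,dy$ from (\ref{eq:entropy}) directly, working on $\Phi_r$ as a Hilbert space in its own right (it is a closed linear subspace of $\bar\Phi_r$, being the kernel of the bounded map $\phi \mapsto \sum_{i=1}^n \phi'(\cdot + i - 1)$).

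For parts (1) and (2), fix $g \in G_r$. Since $g'$ is continuous and strictly positive on the compact interval $[0,n]$, there exists $\delta>0$ with $g' \geq \delta$. For $\phi \in \Phi_r$ with $\|\phi\|_{H^r}$ so small that $\|\phi'\|_\infty \leq \delta/2$ (using the Sobolev embedding $H^r \hookrightarrow C^{r-1}$ valid for $r \geq 2$), Taylor's theorem applied to $s\mapsto s\ln s$ gives
\[
H(g+\phi) - H(g) = -\int_0^n \phi'(y)\ln g'(y)\,dy - \int_0^n \frac{(\phi'(y))^2}{2\big(g'(y)+\theta(y)\phi'(y)\big)}\,dy,
\]
where the intermediate term $-\int_0^n\phi'\,dy$ vanishes because $\phi(n) = \phi(0) = 0$. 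The first integral is bounded by $\|\ln g'\|_{L^2}\|\phi\|_{H^r}$ via Cauchy--Schwarz, and the remainder by $\delta^{-1}\|\phi\|_{H^r}^2$. Hence $DH(g)[\phi] = -\int_0^n\phi'\ln g'\,dy$ is a continuous linear functional on $\Phi_r$, and the Riesz representation theorem supplies the unique $X(g) \in \Phi_r$ with $DH(g)[\phi] = \langle X(g),\phi\rangle_{H^r}$. Equality in Cauchy--Schwarz then identifies $X(g)/\|X(g)\|_{H^r}$ as the unique direction of steepest ascent.

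For part (3), on a small $H^r$-neighborhood both $g_1', g_2'$ stay uniformly above $\delta/2$, so $|\ln g_1' - \ln g_2'| \leq (2/\delta)|g_1' - g_2'|$. Then
\[
(DH(g_1) - DH(g_2))[\phi] = -\int_0^n \phi'(y)\big[\ln g_1'(y) - \ln g_2'(y)\big]\,dy
\]
has operator norm at most $(2/\delta)\|g_1 - g_2\|_{H^r}$; since the Riesz isomorphism is an isometry, $\|X(g_1) - X(g_2)\|_{H^r}$ inherits the same Lipschitz bound.

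The main obstacle is part (4). Integration by parts --- legitimate because $\phi(0) = \phi(n) = 0$, $g'(0) = g'(n)$, and $g''/g' \in L^2$ --- rewrites $DH(g)[\phi] = \int_0^n \phi(y)(\ln g')'(y)\,dy$, so $X(g) = 0$ iff $h := (\ln g')'$ lies in $\Phi_r^\perp$ inside periodic $L^2[0,n]$. Expanding in the Fourier basis $\{e^{i 2\pi k y/n}\}$, a direct calculation shows that the constraint $\sum_{i=1}^n \phi'(\cdot + i - 1) = 0$ annihilates exactly the modes with $k$ a nonzero multiple of $n$; testing $h$ against $\cos(2\pi k y/n) - 1$ and $\sin(2\pi k y/n)$ for $n \nmid k$ then identifies $\Phi_r^\perp$ as the space of mean-zero functions of period $1$. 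Thus $(\ln g')'$ is period-$1$ and mean-zero, which combined with the period-$n$ periodicity of $\ln g'$ forces $\ln g'$, hence $g'$, to be of period $1$. The invariance constraint $\sum_{i=1}^n g'(y + i - 1) = 1$ then collapses to $n\,g'(y) = 1$, giving $g'(y) \equiv 1/n$ and therefore $g(y) = y/n$ as the unique critical point.
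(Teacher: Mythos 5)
Your argument for parts (1)--(3) is essentially the paper's own proof: a Taylor expansion of the integrand (the paper expands $\ln(g'+\epsilon\phi')$ and controls the remainder with a Sobolev-embedding bound on $\|\phi'\|_\infty$, you use the Lagrange remainder for $s\ln s$ and the embedding $H^r\hookrightarrow C^{r-1}$), then Riesz representation for the gradient and equality in Cauchy--Schwarz for the steepest-ascent direction, and the same $|\ln g_1'-\ln g_2'|\le C|g_1'-g_2'|$ estimate for the Lipschitz property; your version is if anything slightly tighter, since you make explicit that the boundary term vanishes because $\phi(0)=\phi(n)=0$ and that the Riesz isometry transfers the Lipschitz bound from $DH$ to $X$.

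The genuine difference is part (4). The paper does not prove it internally: it simply invokes the earlier result that $X(g)=0$ iff $g$ is linear, citing \cite{J21}. You instead give a self-contained Fourier-analytic argument: integration by parts turns the criterion $X(g)=0$ into $g''/g'$ annihilating $\Phi_r$ under the $L^2$ pairing, the constraint $\sum_{i=1}^n\phi'(\cdot+i-1)=0$ kills exactly the nonzero modes $k$ divisible by $n$, and the annihilator is identified as the mean-zero period-$1$ functions, whence $\ln g'$ is period $1$ and the measure-preservation identity forces $g'\equiv 1/n$. This is correct and buys independence from the cited reference; it also anticipates the projection structure the paper only exhibits later, in Section 4, where the $L^2$-Riesz representative $-\ln h+\frac1n\sum_i\ln h(\cdot+i)$ is precisely the projection away from period-$1$ functions. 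One small point to tighten: testing against $\cos(2\pi ky/n)-1$ only gives that all cosine coefficients of $h$ at modes $n\nmid k$ equal a common multiple of the mean of $h$; you must add the (one-line) observation that square-summability of Fourier coefficients forces that common value, hence the mean, to vanish --- or test against differences of two such cosines --- before concluding that $h$ is a mean-zero period-$1$ function. With that sentence added, your part (4) is a complete proof where the paper offers only a citation.
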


An immediate consequence of  Theorem \ref{thm1} is that the differential equation defined on $G_r$ by
$ \frac{d \F_t}{d t }\big|_{t=0} = X(g)$ has a unique local solution for $t \in (-\epsilon, \epsilon)$: $\F_t(g) = \F(t, g) $ is a local flow defined on
$(-\epsilon_g, \epsilon_g) \times G_r $.

\section{Proof of Theorem \ref{thm1} }\label{sec.proof1}

{\bf Proof of Theorem \ref{thm1} (1)}

  We need to show that $H(g)$ is Fr\'echet differentiable. Since $G_r$ is a submanifold of $\bar G_r$.
  We can just prove  differentiability in $\bar G_r$. That means we do not need to consider the constraint of preservation of the Lebesgue measure.

We first calculate the first order term in $\epsilon$ of the difference
$H(g + \epsilon \phi) - H(g)$.

$$H(g+ \epsilon \phi ) = - \int_0^n \ln( g'(y) +\epsilon \phi'(y))  ( g'(y) + \epsilon \phi'(y)) dy .$$

Note that $$\ln( g'(y) + \epsilon \phi'(y)) = \ln  g'(y) (1 + \epsilon \phi'(y)/g'(y)   ) $$
$$= \ln g'(y)  + \ln (1 +\epsilon \phi'(y)/g'(y)   )$$
$$= \ln g'(y)  +  \epsilon \phi'(y)/g'(y) + O([\epsilon\phi'(y)/g'(y) ]^2) .$$
We have
$$ \ln( g'(y) + \epsilon \phi'(y))  ( g'(y) + \epsilon \phi'(y)) $$
$$= g'(y) \ln g'(y)  + \epsilon \phi'(y)+ \epsilon \phi'(y) \ln g'(y)  + O([\epsilon\phi'(y)]^2/g'(y)) + O( [\epsilon\phi'(y)]^3/[g'(y) ]^2).$$

Thus,  the first order term  of $H(g + \epsilon \phi) - H(g)$ in $\epsilon$
is
$$ - \int_0^n  ( 1+ \ln g'(y) ) \phi'(y) dy,$$ Since $\phi(0)=\phi(n) =0$, we have
the derivative operator   formula
\begin{align}\label{eq:dev}  DH_g\phi & = <DH(g), \phi> = - \int_0^n    \ln g'(y) \   \phi'(y) dy \\
& =  - \int_0^n    \ln g'(y) \    d \phi(y)  =    \int_0^n     \frac{g''(y)}{g'(y)} \   \phi(y) dy . \end{align}

We now show that $H(g)$ is Fr\'echet differentiable on $\bar G_r$, i.e, for any given $g \in \bar G_r$,
\[ \lim_{\epsilon \to 0}  \sup_{   \|\phi\|_{H^r} = 1 }
\frac{1}{\epsilon}[H(g + \epsilon \phi) - H(g)  - \epsilon DH_g \phi ] =0.\]
According to our earlier calculation,
\[ | H(g + \epsilon \phi) - H(g)   - \epsilon DH_g \phi|\]
\[= \epsilon^2  \int_0^n O([\phi'(y)]^2/g'(y)) + \epsilon  O( [\phi'(y)]^3/[g'(y) ]^2) dy  .\]

We now need a simple lemma on the upper bound of $|\phi'(y)|$:
\begin{lemma}\label{lm:ub}
Given any function $\phi \in \Phi_r$, $r\ge 2$,
$|\phi'(y)| \le M  \| \phi(y) \|_{H^2}$.  In general,
$| \phi^{(k)} | \le M \| \phi(y) \|_{H^{k+1}}$, where $1 \le k < r $ and $M$ is constant independent of $\phi$.
\end{lemma}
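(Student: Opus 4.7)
The plan is to recognize this as the one-dimensional Sobolev embedding $H^1 \hookrightarrow C^0$ applied to $\phi^{(k)}$: once $\phi^{(k)}$ is shown to have a zero in $[0,n]$, the desired bound follows immediately from the fundamental theorem of calculus and Cauchy-Schwarz, yielding the explicit constant $M = \sqrt{n}$.

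The first step is to produce a zero of $\phi^{(k)}$. For $1 \le k < r$, the function $\phi^{(k-1)}$ is continuous on $[0,n]$ (since $k - 1 \le r - 2$) and satisfies $\phi^{(k-1)}(0) = \phi^{(k-1)}(n)$: when $k \ge 2$ this is the endpoint identification built into the definition of $\Phi_r$, and when $k = 1$ it follows from $\phi(0) = 0$ together with the $j = 0$ case of that same identification, which forces $\phi(n) = 0$. Consequently $\int_0^n \phi^{(k)}(y)\, dy = \phi^{(k-1)}(n) - \phi^{(k-1)}(0) = 0$, and by the mean value theorem for integrals applied to the continuous function $\phi^{(k)}$ there exists $y_0 \in [0,n]$ with $\phi^{(k)}(y_0) = 0$.

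The second and final step is to write $\phi^{(k)}(y) = \int_{y_0}^y \phi^{(k+1)}(s)\, ds$ for arbitrary $y \in [0,n]$, which is legitimate because $\phi^{(k+1)} \in L^2[0,n]$ (using $k + 1 \le r$), and then apply Cauchy-Schwarz to obtain $|\phi^{(k)}(y)| \le \sqrt{n}\, \|\phi^{(k+1)}\|_{L^2[0,n]} \le \sqrt{n}\, \|\phi\|_{H^{k+1}}$. This establishes the lemma with $M = \sqrt{n}$, a constant depending only on the fixed degree $n$. There is no serious obstacle; the one point worth flagging is that for $k = 1$ the periodicity needed to force the mean of $\phi'$ to vanish must be extracted from the combination of $\phi(0) = 0$ and the $k = 0$ endpoint identification, rather than being assumed outright.
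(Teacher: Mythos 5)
Your proof is correct, but it takes a different route from the paper. The paper proves the lemma through Fourier analysis: since $\phi$ extends to a period-$n$ function with $\phi^{(r)}\in L^2$, it expands $\phi$ in a Fourier series, bounds $\sup|\phi|$ by $\sum_k(|a_k|+|b_k|)$, and applies Cauchy--Schwarz to the coefficient sequence with weights $1/k$ (using $\sum 1/k^2<\infty$) to get $\sup|\phi|\le M\|\phi\|_{H^1}$-type control, then applies this to $\phi'$ to conclude $|\phi'(y)|\le M\|\phi\|_{H^2}$. You instead work directly on the interval: the endpoint identifications in $\Phi_r$ (together with $\phi(0)=0$ in the $k=1$ case, which you correctly flag) force $\int_0^n\phi^{(k)}\,dy=\phi^{(k-1)}(n)-\phi^{(k-1)}(0)=0$, so $\phi^{(k)}$ has a zero $y_0$, and then $\phi^{(k)}(y)=\int_{y_0}^y\phi^{(k+1)}(s)\,ds$ with Cauchy--Schwarz gives $|\phi^{(k)}(y)|\le\sqrt{n}\,\|\phi^{(k+1)}\|_{L^2}\le\sqrt{n}\,\|\phi\|_{H^{k+1}}$. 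Your argument buys an explicit and cleaner constant $M=\sqrt{n}$ and avoids any bookkeeping with Fourier normalizations for period $n$; its only implicit requirement, which you note, is that $\phi^{(r-1)}$ is absolutely continuous with a.e.\ derivative $\phi^{(r)}\in L^2$ when $k=r-1$, which is exactly the Sobolev-space meaning of $\phi^{(r)}\in L^2$ used throughout the paper (and is equally implicit in the paper's Fourier expansion). The paper's Fourier route, on the other hand, is the computation that sets up the sequence-space model of $\Phi_r$ used later in the ODE representation, so it integrates more directly with Section 5; as a proof of the stated sup-norm bound, both are complete.
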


 {\it Proof of Lemma \ref{lm:ub}}

Since $\phi$ is of period $n$ and $\phi^{(r)} \in L^2[0,n]$, we have $\phi$'s Fourier expansion
\[  \phi(y) = a_0 + \sum_{k=1}^\infty  a_k \cos \frac{ 2 k \pi}{n} y  + b_k \sin \frac{ 2 k \pi}{n} y,\]
where the Fourier coefficients $a_k, b_k, k\ge 1$ satisfy the condition
$$   \sum_{k=1}^\infty  k^{2r} (a^2_k   + b^2_k ) < \infty .$$ For convenience,
we may assume $a_0=0.$
Thus,
$$| \phi(y) | = |   \sum_{k=1}^\infty  a_k \cos \frac{ 2 k \pi}{n} y  + b_k \sin \frac{ 2 k \pi}{n} y           |$$
$$ \le  \sum_{k=1}^\infty  |a_k | + |b_k| =    \sum_{k=1}^\infty \frac{1}{k} (k |a_k | + k |b_k|)
\le 2 \left[\sum_{k=1}^\infty  \frac{1}{k^2} \right]^{\frac{1}{2}} \left[ \sum_{k=1}^\infty  k^{2} (a^2_k   + b^2_k )\right]^{\frac{1}{2}} .$$
Let $M= 2 \left[\sum_{k=1}^\infty  \frac{1}{k^2} \right]^{\frac{1}{2}}$ and
 apply the inequality above to $\phi'(y)$. We have
 $$ |\phi'(y)| \le M \|\phi(y)\|_{H_2}.$$

$\hfill \Box$

Since $g'(y)>0$ is bounded away from $0$ and $\phi^{(2)} \in L^2[0,n]$, the integrand
$ O([\phi'(y)]^2/g'(y)) + \epsilon  O( [\phi'(y)]^3/[g'(y) ]^2)$ is a bounded function over $[0,n]$.
 We have
$$\lim_{\epsilon\to 0} \epsilon \int_0^n O([\phi'(y)]^2/g'(y)) + \epsilon  O( [\phi'(y)]^3/[g'(y) ]^2) dy=0.$$ $H(g)$ is Fr\'echet differentiable at any $ g \in G_r$.

{\bf Proof of Theorem \ref{thm1}  (2) \& (4)}

We pick an orthonormal basis of $\Phi_r$: $\{\textbf{e}_i\}_{i=1}^\infty$.
Define  $$X(g)=   DH_g= \sum_{i=1}^\infty b_i \textbf{e}_i, $$ where  $b_i=   DH_g (\textbf{e}_i) , i=1,2, \cdots.$

Given any $\phi \in \Phi_r$ with $\| \phi\|_{H^r} =1$ and $\phi= \sum_{i=1}^\infty <\phi, \textbf{e}_i>_{H^r} \textbf{e}_i $, where $<\cdot>_{H^r}$ denotes the inner product of $\Phi_r$, we have
$$DH_g(\phi) =\sum_{i=1}^\infty b_i <\phi, e_i>_{H^r}  .$$
$|    DH_g(\phi)        |$ reaches maximum if and only if $ <\phi, e_i>_{H^r} =C b_i, i = 1,2, \cdots$ for some constant $C$, i.e.
$\phi = C X(g)$. Since $\| \phi\|_{H^r} =1  $, we have $C = [ \| X(g)\|_{H^r}             ]^{-1}.$

Thus,  the gradient vector field $X(g)$ is well-defined on Hilbert manifold $G_r$. Since we know that  $X(g)=0$ if and only if
$g$ is linear \cite{J21}, $X(g) \not=0$ for all $g \in G_r$ except for $g$ linear.

{\bf Proof of Theorem \ref{thm1} (3)}

We now prove that the derivative operator $DH_g$  is Lipschitz, which leads to the local existence of the gradient flow \cite{LY}.

Given any two maps $g_1, g_2 \in G_r$,   denote $\psi=    g_2 - g_1 \in \Phi_r$.
We now estimate the distance between two derivative operators $DH_{g_1}$ and $DH_{g_1+\psi}$.
When $g_1$ and $g_2$ are close, their open neighborhoods overlap.
Thus, we can assume two derivative operators $DH_{g_1}$ and $DH_{g_2}$ are acting on the same Sobolev space $\Phi_r$.
 For simplicity of notation, we drop the subscript in $g_1$.  We have
$$ \| DH_{g + \psi} -  DH_{g } \| = \sup_{\|\phi\|_{H^r}=1} | \int^n_0 [ \ln ( g' + \psi') - \ln(g')] \phi' dy|$$
$$ \le \sup_{\|\phi\|_{H^r}=1}  \left[\int_0^n | \ln ( g' + \psi') - \ln(g')    |^2 d y   \int_0^n |\phi'|^2 d y              \right]^{1/2}$$
$$ \le  \left[\int_0^n | \ln ( g' + \psi') - \ln(g')    |^2 d y \right]^{1/2}   =     \left[\int_0^n    | \ln ( 1 + \frac{\psi'}{g'}) |^2 d y \right]^{1/2}      $$

Since $g' > 0$ is bounded from below and $\psi \in \Phi_r, r\ge 2$,
we may assume that $|\psi'|$ is sufficiently small and $|\frac{\psi'}{g'}| < \delta< 1$.  So, there is a constant $K$ such that $$| \ln ( 1 + \frac{\psi'}{g'}) | \le K |\frac{\psi'}{g'}|.$$

Thus, we have
$$ \| DH_{g + \psi} -  DH_{g } \| \le \left[\int_0^n   K |   \frac{\psi'}{g'}  |^2 d y \right]^{1/2}   \le C \| \psi \|_{H^r} ,   $$ where
$C = [\frac{K}{\displaystyle \min_{y \in [0,n]} g'(y) }]^{1/2}.$ We conclude that $DH_g$ is Lipschitz continuous over $G_r$.

{\bf Remark}  Since $\ln x$ is an analytic function in $x$ in a  small neighborhood of
any $x_0 > 0$, we can in fact show that  $H(g)$ is analytic in $g$ on Hilbert manifold $G_r$.

\section{Global Existence of the Gradient Flow}\label{sec.global}

We now study the global existence of the gradient flow of the SRB entropy for $t \in [0, \infty)$
and prove that every trajectory converges to the unique equilibrium where the expansion rate of the map is a constant.
 We note that under the Sobolev norm, the gradient vector at point $g \in G_r$ is defined by the integral  $ \int_0^n     \frac{g''(y)}{g'(y)} \   \phi(y) dy$.
 While the integral does define a linear functional in the tangent space $\Phi_r$ for each $ g \in G_r$. Its Riesz representation in $\Phi_r$ is, in general, not $ \frac{g''(y)}{g'(y)}$ since it is in general not a vector in $\Phi$. This poses an obstacle for proving the global existence, even though,
  the global existence is likely true. We instead reconsider the global existence of the gradient flow in a different Hilbert metric on a slightly different Hilbert manifold.

We first expand the domain of the entropy functional to a larger Hilbert space where the gradient vector's  Riesz representation can be obtained explicitly.

For the Hilbert manifold $$ G_r=\{  g(y) \in C^{r-1}[0,n]:  y\in [0,n];  g(0)=0, g(n)=1; 0 < g'(y) < 1; g^{(r)} \in L^2[0,n];$$
$$\qquad \qquad g^{(k)}(o^+) = g^{(k)}(n^-), 1 \le k \le r-1, r\ge 2, \sum_{i=0}^{n-1} g'(y +i)  =1, y \in [0,1] \},$$
each map $g(y)\in G_r $ is uniquely defined by its derivative:  $g(y) = \int_0^y  g'(\tau) d \tau $. So,  we can embed
 $ G_r$ into another Hilbert manifold $G'$:
$$G' =\{ h(y)  \in L^2[0, n]:  0<  h(y) < 1\  (a.e.),  \int_0^n h(y) dy = 1,  \sum_{i=0}^{n-1} h(y+i)  =1, y \in [0,1] \} .$$
Given any $g(y) \in   G_r$, the embedding map is defined  by  $$g(y) \to h(y) = g'(y) \in G'.$$  $G'$ is clearly a Hilbert manifold
with a tangent space
\begin{equation}\label{eq:tangent}
\Psi =\{ \psi(y)  \in L^2[0, n],    \int_0^n \psi(y) dy = 0,   \sum_{i=0}^{n-1} \psi(y+i) =0, y \in [0,1] \}, \end{equation}
equipped with a common Hilbert norm $\| \phi \| = \int_0^n \psi^2(y) dy .$

Note that the condition  $ \int_0^n \psi(y) dy = 0 $ in (\ref{eq:tangent}) can be removed since it is implied by the condition $ \sum_{i=0}^{n-1} \psi(y+i)     =0$:
$$  \int_0^n \psi(y) dy = \int_0^1 \psi(y) dy + \int_1^2 \psi(y) dy + \cdots + \int_{n-1}^n \psi(y) dy$$
$$= \int_0^1 \psi(y) dy +  \int_0^1 \psi(z+1) dz +\cdots + \int_0^1 \psi(z+n-1) dz =   \int_0^1 \sum_{i=0}^{n-1} \psi(y+i) dy=0.$$
The tangent space $\Psi$ is a subspace of $ L^2[0, n]$.

Note also that
the entropy functional, $$H(g)= - \int_0^n \ln g'(y)\  g'(y) dy$$ on $ G_r$ becomes
 $$H(h) = - \int_0^n \ln h(y)\  h(y) dy,$$ which is well defined on entire $G'$ and the Gateaux
  derivative of $H(h)$ in the direction of $\psi \in \Psi$ exists and
has the same formula (see (\ref{eq:dev})):
$$DH_h(\psi) = - \int_0^n \ln h(y)\  \psi dy.$$ This Gateaux derivative defines a bounded linear functional on the tangent space $\Psi$.

A direct calculation will confirm  that this linear functional's Riesz representation is given by
$$R_h(y) =-    \ln h(y) + \frac{1}{n}\sum_{i=0}^{n-1} \ln h(y+i)  ,$$ where $h(y)$ is extended into a period $n$ function over $[0, \infty)$.
Indeed, we can easily verify that $R_h(y) \in \Psi$.    We only need to verify that $ \int_0^n R_h(y) \psi dy = - \int_0^n \ln h(y)  \psi dy$ for all $\psi \in \Psi$ since the identity
$$ \sum_{i=0}^{n-1} R_h(y+i)  =0$$ for all $y \in [0, 1]$   clearly holds  due to the periodicity of $h(y)$.

To see that $ \int_0^n R_h(y) \psi dy = - \int_0^n \ln h(y)  \psi dy$ for all $\psi \in \Psi$, we first extend $\psi$ to a period $n$ function and
calculate the following integral applying   integration by substitution and periodicity of both functions $h(y)$ and $\psi(y)$:
$$ \int_0^n  \sum_{i=0}^{n-1} \ln h(y+i)  \ \psi(y)  dy .$$ Let $z= y+i$ in each integral,
we have
$$ \int_0^n  \sum_{i=0}^{n-1} \ln h(y+i)  \ \psi(y)  dy = \int_0^n \ln h(z) \ \sum_{i=0}^{n-1}  \psi(z-i)  dz=0.$$

We summarize the properties of the entropy functional  $H(h)=- \int_0^n \ln h(y)\  h(y) dy $ over the Hilbert manifold $G'$ in the following proposition.
\begin{proposition}\label{prop:G'}
(1)  $H(h)$ is Gateaux differentiable at every $h \in G'$ and the derivative formula in the direction of $\psi \in \Psi$ is given by a continuous linear functional on $\Psi$:  $$DH_h(\psi) = - \int_0^n \ln h(y)  \psi dy .$$

(2) The Riesz representation of the derivative operator $DH_h $ over $\Psi$  is $$R_h(y)= - \ln h(y) + \frac{1}{n}   \sum_{i=0}^{n-1}\ln h(y+i) ,$$ where $h(y)$ is extended periodically to $[0, \infty)$.

(3) The maximum value of  $DH_h(\psi)$ over $\psi \in \Psi$ with $   \int_0^n \psi^2(y) d y = 1 $
is reached at the unit vector  $ R_h(y) [ \int_0^n R_h^2(y) dy]^{-\frac{1}{2}} .$

\end{proposition}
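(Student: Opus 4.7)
The plan is to handle the three parts in sequence, with the substantive content concentrated in (2); parts (1) and (3) will follow from a routine expansion and from Cauchy--Schwarz respectively. The unifying feature of all three arguments is the shift identity $\sum_{i=0}^{n-1}\psi(y+i)=0$ built into the definition of $\Psi$, which allows us to add a convenient cyclic average to $-\ln h$ without changing how it pairs against any $\psi\in\Psi$.

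For (1), I would start from the Taylor expansion $\ln(h+\epsilon\psi)=\ln h+\epsilon\psi/h+O(\epsilon^2\psi^2/h^2)$, multiply by $h+\epsilon\psi$, and collect terms of order $\epsilon$ to get $H(h+\epsilon\psi)-H(h)=-\epsilon\int_0^n(1+\ln h)\psi\,dy+O(\epsilon^2)$. The constant contribution $-\int_0^n\psi\,dy$ vanishes because $\psi\in\Psi$, as established just before the proposition (the condition $\int_0^n\psi\,dy=0$ is implied by the cyclic identity). What remains is the claimed directional derivative $DH_h(\psi)=-\int_0^n\ln h(y)\,\psi(y)\,dy$. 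Continuity of this linear functional in the $L^2$ topology on $\Psi$ follows from Cauchy--Schwarz once one knows $\ln h\in L^2[0,n]$.

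For (2), the candidate $R_h(y)=-\ln h(y)+\tfrac{1}{n}\sum_{i=0}^{n-1}\ln h(y+i)$ is manufactured so that its cyclic average vanishes identically, giving $R_h\in\Psi$. To verify the Riesz identity $\int_0^n R_h\,\psi\,dy=DH_h(\psi)$ it suffices to show that the corrective term pairs trivially against every $\psi\in\Psi$. This is a one-line calculation: performing the substitution $z=y+i$ in each summand and swapping sum with integral, together with $n$-periodicity of $h$, converts the pairing into $\int_0^n\ln h(z)\sum_{i=0}^{n-1}\psi(z-i)\,dz$, which is zero by the defining identity of $\Psi$. Uniqueness of $R_h$ is then the Riesz representation theorem applied to the Hilbert space $\Psi$.

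For (3), since $DH_h(\psi)=\langle R_h,\psi\rangle_{L^2}$, the Cauchy--Schwarz inequality gives $DH_h(\psi)\le\|R_h\|_{L^2}\|\psi\|_{L^2}$ with equality precisely when $\psi$ is a nonnegative scalar multiple of $R_h$; the unit-norm constraint then singles out $\psi=R_h/\|R_h\|_{L^2}$ as the unique maximizer. The one subtlety I anticipate across all three parts is the integrability of $\ln h$: the definition of $G'$ only requires $0<h<1$ a.e., so $\ln h$ could in principle fail to lie in $L^2[0,n]$. I would address this by restricting attention to the open subset of $G'$ on which $h$ is essentially bounded away from $0$, which in particular contains the image of $G_r$ (where $h=g'$ is continuous), and is therefore ample enough to support the gradient flow studied in Section~\ref{sec.global}.
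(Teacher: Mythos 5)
Your proposal is correct and follows essentially the same route as the paper: the paper obtains (1) by the same expansion already used for (\ref{eq:dev}), verifies (2) exactly by your substitution $z=y+i$ combined with the periodic extension and the cyclic identity defining $\Psi$, and (3) is the same Cauchy--Schwarz/Riesz maximization. Your closing caveat about the integrability of $\ln h$ (restricting to $h$ essentially bounded away from $0$, which covers the image of $G_r$ and the maps in Theorem \ref{thm.global}) is a legitimate refinement that the paper passes over silently, since for a general $h\in G'$ the functional $\psi \mapsto -\int_0^n \ln h\,\psi\,dy$ need not be bounded on $\Psi$.
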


We denote this gradient vector field over $G'$ by $$Y(h)=R_h(y)= - \ln h(y) + \frac{1}{n}   \sum_{i=0}^{n-1}\ln h(y+i).$$
 It is Lipschitz continuous in terms of $h$ under the $L^2$ norm, thus, locally integrable.

We now prove the following theorem on the global existence of the gradient flow of the SRB entropy
and the convergence of every flow trajectory to a unique equilibrium as $ t\to \infty.$

\begin{theorem}\label{thm.global} For Lebesgue measure preserving $C^{1+\alpha}$ expanding maps on the circle,
the SRB entropy functional $H(f) = \int_0^1 \ln f'(x) dx $ induces a gradient flow on the space of derivatives of inverse maps under the $L^2$ norm.
This gradient flow exists globally for all $t \in [0, \infty)$ and every trajectory converges to the unique equilibrium corresponding to the linear expanding map.
 \end{theorem}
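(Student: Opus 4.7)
The plan is to treat $H(h)$ itself as a Lyapunov function and to enclose every trajectory in a region where the vector field $Y(h) = -\ln h + \tfrac{1}{n}\sum_{i=0}^{n-1}\ln h(\cdot + i)$ is uniformly Lipschitz into $L^2[0,n]$. First I establish $L^\infty$ bounds via a maximum/minimum principle: at any spatial argmax $y^{\ast}(t)$ of $h(\cdot,t)$, the inequalities $\ln h(y^{\ast}+i,t) \leq \ln M(t)$ yield $\tfrac{1}{n}\sum_i \ln h(y^{\ast}+i,t) \leq \ln M(t)$ and hence $Y(h)(y^{\ast},t) \leq 0$, so a standard maximum-principle argument gives $\tfrac{d}{dt}M(t) \leq 0$, and symmetrically $\tfrac{d}{dt}m(t) \geq 0$ for the spatial minimum. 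Any $C^{1+\alpha}$ initial datum satisfies $0 < \delta \leq h_0(y) \leq 1-\delta$ for some $\delta>0$, so $\delta \leq h(\cdot,t) \leq 1-\delta$ persists for all $t \geq 0$; the constraints $\int_0^n h\,dy = 1$ and $\sum_i h(y+i) \equiv 1$ are preserved automatically because $Y(h) \in \Psi$ by construction.

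On this invariant region $\ln h$ is Lipschitz in $h$, so $Y$ is globally Lipschitz into $L^2[0,n]$, and the local flow furnished by Proposition \ref{prop:G'} extends by iteration to $[0,\infty)$. Since the Riesz representation of $DH_h$ is precisely $Y(h)$, I obtain
\[ \frac{d}{dt} H(h(t)) \;=\; DH_{h(t)}\bigl(Y(h(t))\bigr) \;=\; \bigl\|Y(h(t))\bigr\|_{L^2}^2 \;\geq\; 0, \]
and Jensen's inequality gives the Gibbs bound $H(h) \leq \ln n$. Monotonicity plus this ceiling implies $\int_0^\infty \|Y(h(t))\|_{L^2}^2\,dt < \infty$ and convergence $H(h(t)) \to H^{\ast}$ for some $H^{\ast} \leq \ln n$, so in particular there is a sequence $t_k \to \infty$ with $\|Y(h(t_k))\|_{L^2} \to 0$.

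To identify $H^{\ast} = \ln n$, I exploit the algebraic structure of $Y$. Let $P$ be the orthogonal projection in $L^2[0,n]$ onto period-$1$ functions, explicitly $(P\phi)(y) = \tfrac{1}{n}\sum_i \phi(y+i)$; then $Y(h) = -(I-P)\ln h$. Writing $\ln h(\cdot,t_k) = v_k + w_k$ with $v_k = P\ln h(\cdot,t_k)$ of period $1$ and $\|w_k\|_{L^2} \to 0$, the preserved pointwise identity $\sum_i h(y+i,t_k) = 1$ becomes $e^{v_k(y)}\sum_i e^{w_k(y+i)} = 1$; since $\sum_i w_k(y+i) \equiv 0$ and $w_k$ is uniformly bounded in $L^\infty$, a Taylor expansion yields $\|v_k + \ln n\|_{L^2} \leq C\|w_k\|_{L^2} \to 0$. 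Hence $\ln h(\cdot,t_k) \to -\ln n$ and $h(\cdot,t_k) \to 1/n$ in $L^2$, forcing $H^{\ast} = \ln n$. Writing $H(h) = \ln n - D(h\,\|\,1/n)$ for the Kullback--Leibler divergence $D$, Pinsker's inequality combined with the $L^\infty$ bounds upgrades $H(h(t)) \to \ln n$ to $h(t) \to 1/n$ in $L^2$, which is exactly convergence to the derivative of the inverse of the linear expanding map.

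The main technical obstacle is making the maximum-principle step fully rigorous, since $Y$ is nonlocal and $M(t) = \max_y h(y,t)$ is in general only Lipschitz in $t$; the inequality $\tfrac{d}{dt}M(t) \leq 0$ needs to be justified either by a first-contact/barrier argument, by a Danskin-type computation when the argmax is simple, or by smoothing the initial data and passing to the limit. Once the $L^\infty$ bound is in hand, the projection identity $Y = -(I-P)\ln h$ entirely sidesteps the pre-compactness issues that would otherwise plague the convergence step.
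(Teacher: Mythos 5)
Your strategy is sound and genuinely different from the paper's. The paper's proof rests on one structural observation: the gradient field $Y(h)(y)=-\ln h(y)+\tfrac1n\sum_{i=0}^{n-1}\ln h(y+i)$ couples only the $n$ values $h(y),h(y+1),\dots,h(y+n-1)$, so for each fixed $y\in[0,1]$ the quantities $x_k=h(y+k-1)$ evolve by the autonomous $n$-dimensional system $\dot x_k=-\ln x_k+\tfrac1n\sum_{i}\ln x_i$ on the invariant simplex $\sum_k x_k=1$, $0<x_k<1$; global existence and convergence to $x_1=\dots=x_n=\tfrac1n$ are then elementary finite-dimensional facts (the field repels the faces $x_k=0$ because $-\ln x_k\to+\infty$, and the differences satisfy $\dot x_k-\dot x_j=-(\ln x_k-\ln x_j)$, so they contract), and the infinite-dimensional statement is the assembly of these fibers. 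You instead run a genuinely infinite-dimensional argument: invariant $L^\infty$ bounds, global continuation of the Lipschitz flow, the entropy-production identity $\tfrac{d}{dt}H=\|Y(h)\|_{L^2}^2$ with the Jensen ceiling $H\le\ln n$, the projection identity $Y=-(I-P)\ln h$ together with the preserved constraint $\sum_i h(y+i)=1$ to identify the subsequential limit $1/n$, and Pinsker plus the uniform bounds to upgrade to convergence of the whole trajectory. Your route uses more machinery but is the one that would survive in settings without the pointwise decoupling; the paper's is shorter, gives convergence for every $y$ directly, and silently supplies exactly the pointwise control your argument needs.

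The one step you leave open, the maximum/minimum principle, is a genuine gap as written (in an $L^2$ flow the pointwise argmax and the differentiability of $M(t)$ are not free), but it is fixable, and note that only the lower bound is really needed: on $G'$ the constraint $\sum_i h(y+i)=1$ with $h\ge\delta$ already forces $h\le 1-(n-1)\delta$, so it suffices to show that $\{h\ge\delta\}\cap G'$ is forward invariant, which keeps $\ln h$ bounded and $Y$ Lipschitz. The cleanest repair is precisely the paper's decoupling: the Picard iterates of the $L^2$ equation are computed, at the points $y+i$, from the initial values $h_0(y+i)$ alone, so the $L^2$ flow coincides a.e.\ in $y$ with the finite-dimensional flow above, for which $\min_k x_k$ is nondecreasing because $\dot x_k\ge0$ whenever $x_k=\min_j x_j$; this renders your barrier/Danskin discussion unnecessary. (Alternatively, an invariance--viability argument for the closed convex set $\{h\ge\delta\}$ can be used, based on the pointwise inequality $Y(h)\ge0$ a.e.\ on $\{h=\delta\}$ and the fact that $Y(h)\in\Psi$ preserves the affine constraints, but this needs some care with the projection onto the constrained set.) With that step closed, the remainder of your argument --- the chain rule for the extended $C^1$ functional, finiteness of $\int_0^\infty\|Y(h(t))\|_{L^2}^2\,dt$, the Taylor estimate $\|P\ln h(t_k)+\ln n\|_{L^2}\le C\|(I-P)\ln h(t_k)\|_{L^2}$ from $\sum_i w_k(y+i)=0$ and the uniform $L^\infty$ bound, and the Kullback--Leibler/Pinsker upgrade --- is correct and complete, and in fact proves $L^2$ convergence of the whole trajectory, which the paper only asserts fiberwise.
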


\begin{proof}

For any fixed initial map $h(y) \in G'$, let ${\mathcal G}_t(h)=g(t, h)$ denote the local flow defined
by the gradient vector field $Y(h)$ on $G'$ for $t \in (- \epsilon_h, \epsilon_h)$.
For any $y \in [0,1]$, We have
$$ \frac{d}{d t} g(t, h(y) )|_{t=0} = - \ln h(y) + \frac{1}{n}   \sum_{i=0}^{n-1}\ln h(y+i).$$
By periodicity of $h(y)$, for all $k=1,2, \cdots, n-1$,  we also have
$$ \frac{d}{d t} g(t, h(y+k) )|_{t=0}  = - \ln h(y+k) + \frac{1}{n}   \sum_{i=0}^{n-1}\ln h(y+i).$$

Introduce new variables $x_k = h(y+k-1)$, $ k=1,2,\cdots, n$, we have a system of $n$ ordinary differential equations
$$
{\dot x}_k = -   \ln x_k  + \frac{1}{n} \sum_{i=1}^{n}  \ln x_i   , k=1,2, \cdots, n
 $$
subject to the condition $  \sum_{k=1}^n x_k =1$. Consequently, we have
$$\begin{cases}
{\dot x}_1 - {\dot x}_2 &= -   (  \ln x_1 - \ln x_2   )  \\
{\dot x}_2 - {\dot x}_3 &= -   (  \ln x_2  - \ln x_3 )  \\
\quad \cdots \cdots &=\qquad  \cdots \cdots\\
{\dot x}_n - {\dot x}_1 &= -   (  \ln x_n   - \ln x_1 ).
 \end{cases}$$

The solution to the system exists globally for all initial values in the region $0< x_1, x_2, \cdots,  x_k < 1 $
on the invariant plane $ x_1+x_2 + \cdots + x_n =1$
and all solutions converge to the unique equilibrium $x_1=x_2 =\cdots = x_n=\frac{1}{n}$.
\end{proof}

\section{Differential equation representation of the gradient flow}\label{sec.numerical}

We now explore  the possibility of representing the gradient flow $\mathcal F_t(g)$ from Section \ref{sec.local} as explicit differential equations.

Let ${\mathcal F}_t$ denote the gradient flow defined by the vector field $X(g)$ over $G_r$,i.e,
 $  {\mathcal F}_t(g) $ is a map from $ (- \epsilon, \epsilon) \times G_r\to G_r$ differentiable in $t$ and ${\mathcal F}_0(g)=g$
 and
 $$  \frac{d}{dt} {\mathcal F}_t(g)\big|_{t=0} = X(g),$$ where $X(g) \in \Phi_r$ is defined by an integral operator
 $$<X(g), \phi> = - \int_{0}^n \ln g'(y) \ \phi'  dy = \int_{0}^n   \frac{g''(y)}{g'(y)}  \phi(y)
 dy  .$$

 We  see that maps in $ \bar G_r$ can be easily
 represented as a series.  In the simple case when $n=2$,
 we can obtain a system of ordinary differential equations that generates the flow.
 Numerical methods such as Euler's method \cite{Gi} can then be used to obtain   typical approximate trajectories of the flow.

 For any given $ g \in \bar G_r$, $g(y) - \frac{y}{n} \in \bar \Phi_r $ is a continuous periodic
  function of period $n$ and its derivative is bounded.
  Thus, its Fourier series converges to itself both pointwise and
  in the Sobolev norm. Thus, Hilbert manifold $\bar G_r$ can be
  represented as
 $$\bar G_r =\{ \frac{y}{n} + \sum_{k=1}^\infty a_k \cos \frac{2\pi k }{n} y
 + b_k \sin \frac{2\pi k }{n} y, y \in [0,n],  \sum_{k=1}^\infty k^{2r} (a^2_k +b^2_k) < \infty\}, $$
 where $a_k, b_k$ are Fourier coefficients of $g(y) - \frac{y}{n}$ satisfying the condition
 $$ 0< \frac{1}{n}  + \sum_{k=1}^\infty -  \frac{2a_k\pi k }{n} \sin \frac{2\pi k }{n} y
  +  \frac{2b_k\pi k }{n} \cos \frac{2\pi k }{n} y  < 1.$$  Notice that we have replaced the condition $g(0)=0$ by
  dropping the constant term in the Fourier series since the entropy is a function of $g'(x)$.
  This adjustment is also made to the tangent space $\bar \Phi_r$.

 Maps in the submanifold $G_r$ will have to satisfy an addition linear equation:
  \begin{equation} \label{eq:linear}  \sum_{i=1}^n  \sum_{k=1}^\infty
   -  k a_k \sin \frac{2\pi k }{n} (y+i-1)
   +  kb_k \cos \frac{2\pi k }{n} (y +i -1)=0, y \in [0,1].  \end{equation}

Assume that we have an orthonormal  basis of $\Phi_r$: $\{ \textbf{e}_k \}_{k=1}^\infty$.  Then,
 any trajectory of the flow $ u(t,y)= {\mathcal F}_t(g) $ can be written in the form
$$ u(t,y) =  \frac{y}{n} + \sum_{i=1}^\infty c_k(t) {\bf e}_k,$$ with
  $u(y, 0)=\sum_{k=1}^\infty c_k(0) {\bf e}_k = g -\frac{y}{n} $.
  Thus the flow equation $\frac{d \F_t(g)}{dt}\big|_{t=0} = X_g$ becomes
  $$   \sum_{k=1}^\infty \dot c_k(t)\ {\bf e}_k=  \sum_{k}^\infty <X_{u(t,y)}, \textbf{e}_k> \textbf{e}_k. $$
  We have a system of countably many ordinary differential equations:
  \begin{equation}\label{eq:general}\dot c_k(t) = \sum_{i}^\infty <X_{u(t,y)}, \textbf{e}_k>
   =\int_0^n \frac{u_{yy}}{u_y}\  \textbf{e}_k dy, \end{equation}
   where $$u_y = \frac{\partial u(t,y)}{\partial y}
   =\frac{1}{n}+ \sum_{k=1}^\infty c_i(t) \frac{d {\bf e}_k}{dy},\ \text{and}$$
  $$u_{yy} = \frac{\partial^2 u(t,y)}{\partial y^2}
   =  \sum_{k=1}^\infty c_i(t) \frac{d^2 {\bf e}_k}{dy^2}.$$

While it is easy to obtain a set of orthonormal basis
for $\bar \Phi_r$ since the set $\{ \cos \frac{2\pi k }{n} y      ,
   \sin \frac{2\pi k }{n} y     \}_{k=1}^\infty $ is clearly
    an orthogonal basis, the linear constraint (\ref{eq:linear}) poses
      an obstacle to finding orthogonal basis for $\Phi_r$.
      Fortunately, in the simple case when $n=2$, an orthonormal basis for
      $\Phi_r$ can be obtained directly from this set. That will allow us to
       obtain a system of countably many  ordinary differential equations
        explicitly and thus, to  approximate numerically typical trajectories of
         the flow.

 \subsection{ Ordinary differential equation representation  when $n=2$}

 We now look at the case when $n=2$. $r \ge 2$ can be any number.  In this case,
 the linear constraint (\ref{eq:linear}) becomes
 $$  \sum_{k=1}^\infty  {k a_k   }(1+ (-1)^k)  \sin  { \pi k y}
  +   { k b_k    } (1+ (-1)^k)  \cos  { \pi k y }   =0 .$$
  So, we can conclude that  $a_k=b_k =0$ when $k$ is even. For simplicity, we also let $r=2$.
 Since $$\int_0^2  \cos^2 {\pi k }  y   dy =  \int_0^2   \sin^2  { \pi k }  y dy = 1,$$ we have
 \[\| \cos  \pi k   y \|_{H^2} =\| \sin  { \pi k }  y \|_{H^2}
 = \left[  (1+   (k\pi)^2 + (k\pi)^4  )\right]^{1/2} =:\frac{1}{c_k}.   \]
 So, the set $\{ c_{2m-1}  \cos (2m-1) \pi y,  c_{2m-1}  \sin (2m-1) \pi y\}_{m=1}^\infty$
 is an orthonormal basis of $\Phi_2$.

 Let $u(t,y)$ be a trajectory of the flow $\F_t(g)$. For each $t$,
 $$u(t,y) =  \frac{y}{2}  + \sum_{k=1}^\infty    { a_{2k-1}(t)   }  \cos [ (2k-1) \pi   y]
  +  b_{2k-1} (t)    \sin [(2k-1)\pi y]  \in G_2.$$  For a fixed value of $t$, the gradient vector at $u(x,t)$ is
  $$X(u(t,y)) = \sum_{m=1}^\infty <X(u(t,y)), c_{2m-1} \cos (2m-1)\pi  y > c_{2m-1} \cos (2m-1)\pi  y $$
    $$+<X(u(t,y)), c_{2m-1} \sin (2m-1)\pi  y > c_{2m-1} \sin (2m-1)\pi  y       $$
   $$=\sum_{m=1}^\infty c_{2m-1}^2 \left[ <X(u(t,y)),   \cos (2m-1)\pi  y >  \cos (2m-1)\pi  y \right. $$
   $$ \qquad \qquad \qquad \left.  +<X(u(t,y)),   \sin (2m-1)\pi  y >  \sin (2m-1)\pi  y  \right]$$
   $$=\sum_{m=1}^\infty c_{2m-1}^2 \left[ \int_0^2 \frac{u_{yy}}{u_y}  \cos   (2m-1) \pi  y  dy\   \cos (2m-1)\pi  y \right. $$
   $$ \qquad \qquad \qquad   +\left.\int_0^2 \frac{u_{yy}}{u_y}  \sin   (2m-1) \pi  y  dy \ \sin (2m-1)\pi  y  \right].$$

   Notice that $$u_t=\frac{d \F_t(g)}{dt} = \sum_{m=1}^\infty    { \dot a_{2m-1}   }  \cos [ (2m-1) \pi   y]
  +  \dot b_{2m-1}     \sin [(2m-1)\pi y]  \in \Phi_2,$$
   We obtain explicitly a system of ordinary differential equations defined on $G_2$ that
    generates the gradient flow.i.e., the local flow $\mathcal F_t(g)$ is the
    solution to the system of differential equations:
 \begin{equation}\label{eq:ode1}
  \dot a_{2m-1} = c_{2m-1}^2  \int_0^2\frac{u_{yy}}{u_y}  \cos   (2m-1) \pi  y  dy; \end{equation}
 \begin{equation}\label{eq:ode2}
  \dot b_{2m-1} = c_{2m-1}^2   \int_0^2 \frac{u_{yy}}{u_y}  \sin  (2m-1) \pi   y dy, \end{equation}
 where $$u_y =  \frac{1}{2}  + \pi \sum_{k=1}^\infty    (2k-1)[-   { a_{2k-1}   }  \sin [ (2k-1) \pi   y]
  +  b_{2k-1}      \cos [(2k-1)\pi y].$$

\subsection{The partial differential equation connection}

Since the gradient vector $X$ at $u(t,y)\in G_2$ is  defined by
$$<X( u(t,y)), \varphi> = \int_0^2  \frac{u_{yy}}{u_y} \varphi  dy,\ \varphi \in \Phi_2,$$
there is a close connection between the gradient flow $\F_t(g)$ and the solution to
the nonlinear partial differential equation $w_t = \frac{w_{yy}}{w_y}$, a gradient-dependent diffusion equation
defined on the unit circle.
Assume that
$$w(t,y) = \frac{y}{2} + \sum_{k=1}^\infty a_{2k-1}(t) \cos (2k-1)\pi y + b_{2k-1}(t) \sin (2k-1)\pi y \in G_2 $$
 is a solution to $w_t = \frac{w_{yy}}{w_y}$ in some open interval of $t$ and
$w_t $ and $    \frac{w_{yy}}{w_y}$ are both in $L^2[0, 2]$ for each $t$.
  We have $$\sum_{k=1}^\infty \dot a_{2k-1}(t) \cos (2k-1)\pi y + \dot b_{2k-1}(t) \sin (2k-1)\pi y  =
  \frac{w_{yy}}{w_y}$$ as functions of $L^2[0,2]$ for each $t$.
  By orthogonality of the set $$\{\cos (2m-1)\pi y, \sin (2m-1)\pi y \}_{m=1}^\infty$$  and
   $$\int_0^2 \cos^2 (2m-1)\pi y dy = \int_0^2 \sin^2 (2m-1)\pi y dy =1,$$ we have
 \begin{align}\label{eq:pde} \dot a_{2m-1}(t)  &=  \int^2_0  \frac{w_{yy}}{w_y} \cos(2m-1)\pi y dy , \\
    \dot b_{2m-1}(t)  &=  \int^2_0  \frac{w_{yy}}{w_y} \sin(2m-1)\pi y dy ,\end{align}
    where $$w_y =  \frac{1}{2} + \pi \sum_{k=1}^\infty (2k-1)[-a_{2k-1}(t) \sin (2k-1)\pi y + b_{2k-1}(t) \cos (2k-1)\pi y].$$

    The systems in (\ref{eq:pde}) and (\ref{eq:ode1}) differ only by a constant coefficient in front of each equation.

\subsection{Numerical approximation of a flow trajectory}

We limit the scope of our numerical exploration to the case when $g'$ is an even function:
$$g'= \frac{1}{2} + \pi \sum_{k=1}^\infty (2k-1)   b_{2k-1}\cos(2k-1)\pi y, \  g''=-  \pi^2  \sum_{k=1}^\infty (2k-1)^2  b_{2k-1}\sin(2k-1)\pi y .$$
We have $\dot a_{2m-1}= 0$ for all $m \in \mathbb N$. Thus, the system of ODEs in (\ref{eq:ode2}) is reduced to
\begin{equation}\label{eq:bm}
 \dot b_{2m-1} =  c_{2m-1}^2  \int_0^2 \frac{g''}{g'}  \sin  (2m-1) \pi   y dy, \end{equation}
where $$g' =  \frac{1}{2}  + \pi \sum_{k=1}^\infty  (2k-1)   b_{2k-1}   \cos[(2k-1)\pi y].$$
Let $\pi y=\tau. $  We have
$$ \dot b_{2m-1} =  c_{2m-1}^2  \int_0^{2\pi}  \frac{g''}{g'}     \sin [ (2m-1) \tau]  \frac{d\tau}{\pi},$$
where $g' =  \frac{1}{2}  + \pi \sum_{k=1}^\infty   b_{2k-1} (2k-1)   \cos[(2k-1)\tau$  and
$g''=-  \pi^2  \sum_{k=1}^\infty (2k-1)^2  b_{2k-1}\sin[(2k-1)\tau]$.
Denote $ B_k=\pi b_{2k-1}(2k-1)$,
$h(\tau) =  g' = \frac{1}{2}  +   \sum_{k=1}^\infty   B_k     \cos[ (2k-1)\tau]$ and
$h'(\tau) =   -   \sum_{k=1}^\infty   B_k  (2k-1)   \cos[ (2k-1)\tau]$.
So, the system (\ref{eq:bm}) becomes
\begin{equation}\label{eq:bmt}
 \dot b_{2m-1} =  c_{2m-1}^2  \int_0^{2\pi} \frac{h'}{h}  [\sin  (2m-1) \tau] d\tau. \end{equation}

 Replacing $ \dot b_{2m-1}$ in (\ref{eq:bmt}) by  $\frac{\dot B_m}{\pi (2m-1)}$, we have
$$ \dot B_{m} =  \pi (2m-1)   c_{2m-1}^2   \int_0^{2\pi} \frac{h'}{h}        \sin [ (2m-1)  \tau] d\tau,$$
where $$h(\tau) =  \frac{1}{2}  +   \sum_{k=1}^\infty   B_k     \cos[(2k-1)\tau].$$

Or, in one big formula,
\begin{align}\label{eq:big} \dot B_{m} &=    \frac{- \pi (2m-1)}{  1 +  (2m-1)^2 \pi^2 + (2m-1)^4\pi^4} \\
&\cdot  \int_0^{2\pi}  \frac{   \sum_{k=1}^\infty   B_k  (2k-1)   \sin[(2k-1)\tau]}{  \frac{1}{2}
 +   \sum_{k=1}^\infty   B_k     \cos[(2k-1)\tau]            }                \sin[ (2m-1) \tau]     d\tau .\end{align}

Let  $F_m( \{B_k\})$ denote the right hand side of the equation (\ref{eq:big}).

We use  Euler's method to approximate a trajectory of this system of ODEs:
$$  B_m( k \epsilon) = B_m(    (k-1) \epsilon ) + \epsilon F ( \{B_m( (k-1)\epsilon)\}),  k=1,2,\cdots,$$
where $\epsilon>0$ is the step size and the initial point is $\{B_m(0)\}$.

We choose $B_1(0)= \frac{1}{4}$ and $B_m(0)=0, m>1$, i.e., the initial map's derivative is $h(y)= \frac{1}{2} + \frac{1}{4} \cos y.$ We have
$$  B_m( \epsilon) = B_m( 0) + \epsilon F_m( \{B_m(0)\}), m\ge 0$$
where
$$F_m( \{B_m(0)\}) = -   \pi (2m-1)c_{2m-1}^2
 \int_0^{2\pi}   \frac{       \frac{1}{4}  \sin\tau}{  \frac{1}{2}
 +    \frac{1}{4}     \cos\tau            }          \sin(2m-1)\tau d \tau$$
$$= -   \pi (2m-1)  c_{2m-1}^2  \int_0^{2\pi} \frac{          \sin\tau}{  2
 +        \cos\tau            }    \sin(2m-1)\tau d \tau.$$
These values can be easily computed using numeric integration.
We see that $B_m(\epsilon)$ is generally not zero  for all $m \ge 1$,
regardless how small the step size $\epsilon>0$ is.

The numerical simulation of solutions to the system (\ref{eq:big})  is carried out on Maple by Maplesoft.
Since $B_m(k\epsilon)$ decays very fast in $m$, we have only kept three terms in the Galerkin method. The step size $\epsilon=0.1$ in Euler's method \cite{Gi}.

In Figure 1, graphes of $\frac{d}{dy} \F_t(g) - \frac{1}{2} $ are shown for three values of $t$, $t=0, $ $t= 10$, and $t= 20$:
\begin{align*} t=0, \quad &h_0 =  \frac{1}{2} +  \frac{1}{4} \cos \tau.\\
 t=10,  \quad   &h_{200} \approx  \frac{1}{2} +  0.121 \cos \tau - 0.000198 \cos 3 \tau - 0.00000196 \cos 5 \tau. \\
 t=20, \quad &h_{200} \approx  \frac{1}{2} +  0.043 \cos \tau - 0.000196 \cos 3 \tau - 0.00000186 \cos 5 \tau.
 \end{align*}

In Figure 2, we show the differences between   $c_m \cos(y)$ (from the trajectory of the heat equation $u_t =u_{xx}$ with the same initial value) and  $\frac{d}{dy} \F_t(g) - \frac{1}{2} $ for an even larger  $t$. The vertical axis is re-scaled by a factor of $1000$.
$t=50 ,$
$$  h_{500} \approx  \frac{1}{2} +  0.000431 \cos \tau - 0.000113 \cos 3 \tau - 0.00000152 \cos 5 \tau. $$

\begin{figure}\label{fig1}
  \caption{Graphs of the deviation  of the derivative of the inverse map from its equilibrium for various values of $t$ along a trajectory of the gradient flow.}
    \includegraphics[width=0.6\textwidth]{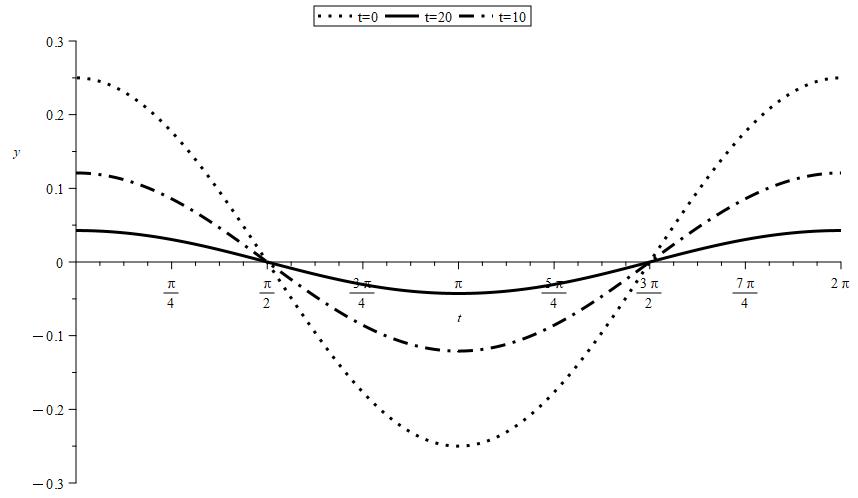}
\end{figure}

\begin{figure}\label{fig2}
  \caption{The dotted line is the graph of a cosine curve. The solid line is the graph of the deviation  of the derivative of the inverse map from its equilibrium when $t$ is large. The verical axis is re-scaled with a factor of 1000.}
     \includegraphics[width=0.6\textwidth]{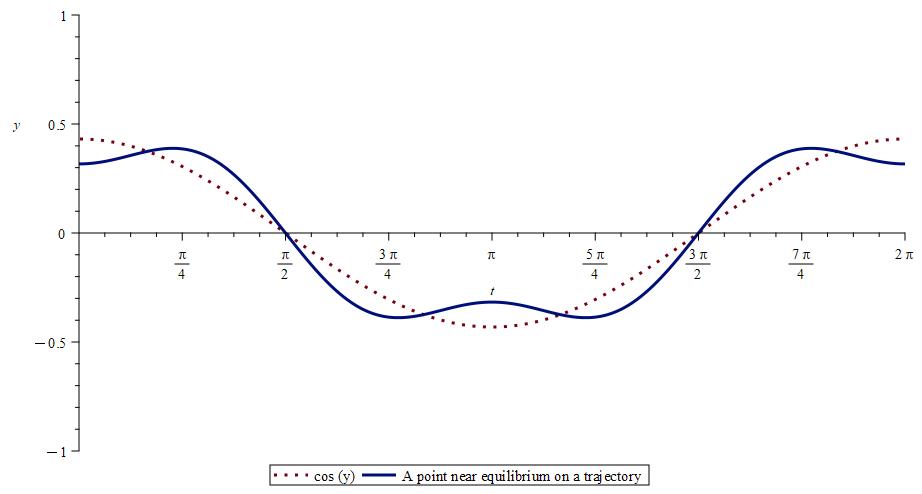}
\end{figure}

{\bf Ending Remarks}

 We see that the diffusion process from this gradient flow is different from that of the heat equation.
Due to the linearity, the flow from the heat equation does not create higher frequency terms  if the initial heat distribution does not have them.  In the gradient flow induced by the SRB entropy, the higher frequency terms appear immediately when $t$ increases even though the amplitudes of these high frequency terms are very small.

  Numerical evidence suggests that the gradient flow $\F_t(g)$ is also defined globally over $G_2$ and
$\lim_{t\to \infty} \F_t(g) = \frac{y}{2}$.  However, a rigorous proof is not available at the moment.

\section{Acknowledgement}
The author thanks John Gemmer, Sarah Raynor, Yang Yun and Yunping Jiang for many beneficial discussions.

Acknowledgement


\begin{thebibliography}{99}


\bibitem[1]{Ba} V. Baladi, Positive Transfer Operators and Decay of Correlations, World Scientific, Singapore, New Jersey, London Hong Kong, 2000
	
\bibitem[2]{Belse} V. Baladi, Linear response or else. In: (English summary) Proceedings of the International Congress of
Mathematicians-Seoul, vol. III, pp. 525–545, Kyung Moon Sa, Seoul (2014)


\bibitem[3]{DM} B. Dacorogna  and J. Moser,
On a partial differential equation involving the Jacobian determinant.
Ann. Inst. H. Poincar\'e Anal. Non Lin\'eaire 7 (1990), no. 1, 1-26.


\bibitem[4]{Gi} S. Giuntini  A Remark on Modified Euler's Method for Differential Equations in Banach Spaces.
Universitatis Iagellonicae ACTA Mathematica 1985.

\bibitem[5]{GL19}  S. Goldstein,  J. L. Lebowitz, R.  Tumulka, and Nino Zangh\`i, Gibbs and Boltzmann Entropy in Classical and  Quantum Mechanics, 2019

\bibitem[6]{G96} G. Gallavotti,   Chaotic hypothesis: Onsager reciprocity and fluctuation-dissipation theorem.
J. Statist. Phys. 84 (1996), no. 5-6, 899-925

\bibitem[7]{G06}G. Gallavotti,   Entropy, thermostats, and chaotic hypothesis. Chaos 16 (2006), no. 4, 043114, 6 pp

\bibitem[8]{GC} G. Gallavotti and E.G.D.  Cohen,  Dynamical ensembles in stationary states. J. Statist. Phys. 80 (1995), no. 5- 6, 931-970.

\bibitem[9]{GR} G.  Gallavotti and D. Ruelle,   SRB states and nonequilibrium statistical mechanics close to
equilibrium. Comm. Math. Phys. 190 (1997), no. 2, 279-285.

\bibitem[10]{JKO1} R. Jordan, D. Kinderlehrer, and F. Otto,  Free energy and the Fokker-Planck equation, Physica D 107 (1997) 265-271

\bibitem[11]{JKO2} R. Jordan, D. Kinderlehrer, and F. Otto, The variational formulation of the Fokker–Planck equation,
  SIAM J. Math.  Anal. 29 (1) (1998) 1–17.


\bibitem[12]{J12} M. Jiang,    Differentiating potential functions of SRB measures on hyperbolic attractors, Ergodic Theory Dynam. Systems 32(4)  (2012) , 1350 - 1369

\bibitem[13]{J21} M. Jiang,    Chaotic hypothesis and the second law of thermodynamics. Pure Appl. Funct. Anal. 6 (2021), no. 1, 205 - 219


\bibitem[14]{J22} M. Jiang, SRB entropy of Markov Transformations, J. Stat. Physics, 188 (2022) No.3 Paper No. 24, 18 pp

\bibitem[14b]{LY} A. Lasota and J.A.  Yorke,  The generic property of existence of solutions of differential equations in Banach space. J. Differential Equations 13 (1973), 1-12.

\bibitem[15]{Maas} J. Maas, Gradient flows of the entropy for finite Markov chains, J. Func. Analysis 261 (2011) 2250-2292

\bibitem[16]{KH} A. Katok and B.  Hasselblatt,
Introduction to the modern theory of dynamical systems.
With a supplementary chapter by Katok and Leonardo Mendoza. Encyclopedia of Mathematics and its Applications, 54. Cambridge University Press, Cambridge, 1995


\bibitem[17]{Mane} R. Ma\~n\'e,   Ergodic theory and differentiable dynamics. Translated from the Portuguese by Silvio Levy. Ergebnisse der Mathematik und ihrer Grenzgebiete (3) [Results in Mathematics and Related Areas (3)], 8. Springer-Verlag, Berlin, 1987.


\bibitem[18]{Ru97a}  D. Ruelle,   Differentiation of SRB states. Comm. Math. Phys. 187 (1997),
no. 1, 227-241.


\bibitem[19]{SVV} R. Saghin, P. Valenzuela-Her\'iquez, and   C.H. V\'asquez,   Regularity of Lyapunov exponents for
diffeomorphisms with dominated splitting,  arXiv:2002.08459v2 [math.DS]


\bibitem[20]{Young} L-S. Young,  What are SRB measures, and which dynamical systems have them? Dedicated to David Ruelle and Yasha Sinai on the occasion of their 65th birthdays. J. Statist. Phys. 108 (2002), no. 5 -6, 733-754.



\end{thebibliography}
\end{document}